\documentclass[12pt]{article}
\usepackage[latin1]{inputenc}
\pagestyle{myheadings}

\usepackage{natbib}

\usepackage[a4paper, hdivide={2.5cm,,2.5cm}, vdivide={3cm,,2cm}]{geometry}
\usepackage{amsmath}
\usepackage{amsfonts,amsgen,amsopn}
\usepackage{amssymb}

\usepackage{setspace}
\doublespacing

\usepackage{color}

\usepackage{graphicx}
\usepackage[figuresright]{rotating}
\usepackage{enumerate}
\usepackage{fancyhdr}
\usepackage{pst-all}
\usepackage{multido}
\usepackage{pst-blur}
\usepackage{amsthm}
\usepackage{framed}
\usepackage{authblk}
\usepackage{xspace}

\pagenumbering{arabic}
\pagestyle{plain}

\newtheorem{theorem}{Theorem}
\newtheorem{lemma}{Lemma}

\theoremstyle{remark}

\newcommand{\pp}{\mathbb{P}}

\newcommand{\fitch}{\texttt{MP}}

\begin{document}
\bibliographystyle{acm}

\begin{titlepage}

\title{\large Maximum Parsimony on Subsets of Taxa}

\author{\normalsize Mareike Fischer} \affil{Biomathematics Research
  Centre, University of Canterbury, Private Bag 4800, Christchurch, New
  Zealand; E-mail: email@mareikefischer.de}

\setcounter{Maxaffil}{1}

\author{\normalsize Bhalchandra D. Thatte} \affil{
  Department of Statistics, University of Oxford, 1 South Parks Road,
  Oxford, OX1 3TG, United Kingdom; E-mail: thatte@stats.ox.ac.uk}

\date{}

\maketitle

{\footnotesize
{\bf Corresponding author: } Mareike Fischer\\

{\bf Keywords: } maximum parsimony; ancestral state reconstruction;
two-state symmetric model \\

{\bf Running head: } MAXIMUM PARSIMONY ON SUBSETS OF TAXA}
\end{titlepage}

\newpage

\setcounter{page}{2}

\subsubsection*{\sc{Abstract}}
In this paper we investigate mathematical questions concerning the reliability (reconstruction accuracy) of Fitch's maximum parsimony algorithm for reconstructing the ancestral state given a phylogenetic tree and a character. In particular, we consider the question whether the maximum parsimony method applied to a subset of taxa can reconstruct the ancestral state of the root more accurately than when applied to all taxa, and we give an example showing that this indeed is possible. A surprising feature of our example is that ignoring a taxon closer to the root improves the reliability of the method. On the other hand, in the case of the two-state symmetric substitution model, we answer affirmatively a conjecture of Li, Steel and Zhang which states that under a molecular clock the probability that the state at a single taxon is a correct guess of the ancestral state is a lower bound on the reconstruction accuracy of Fitch's method applied to all taxa.
\newpage

\par
\vspace{0.5cm}

\label{sec-intro}
In a recent study \citep{li-steel-zhang-2008}, a likelihood analysis of
Fitch's maximum parsimony method \citep{fitch-1971} (which we call MP
for short) for the reconstruction of the ancestral state at the root was
conducted.  It was shown that in a rooted phylogenetic tree if a leaf
(taxon) is closer to the root than all the other leaves, then the
character state at this leaf may sometimes be a more accurate guess of
the ancestral state than the ancestral state constructed by MP applied
to all taxa. The authors also provided an example of a phylogenetic tree
for which MP for the reconstruction of the root state works more
reliably on a subset of taxa closer to the root than on all taxa.

Generally the root state is more likely to be conserved on taxa that are
nearer to the root than on taxa that are farther away. Therefore, it is
not surprising that on some trees the root state can be more reliably
estimated by looking at only taxa nearer to the root.  But can the
reconstruction accuracy of MP improve when a taxon or a subset of taxa
close to the root is ignored?  We
presented a surprising example of a tree on which MP on a subset of taxa
is more likely to reconstruct the correct ancestral state. In our
example, the reconstruction accuracy improves when we ignore a taxon
close to the root from our analysis. Moreover, the ignored taxon may be
arbitrarily close to the root compared to the taxa that are not
ignored. On the other hand, we show that under a molecular clock, considering a single taxon is never better than
considering all taxa for the purpose of ancestral state
reconstruction. Our analysis partially resolves a conjecture of Li,
Steel and Zhang. They conjectured that under a molecular clock, maximum
parsimony on all taxa is expected to generally perform at least as good
(in the sense of the reconstruction accuracy) as reconstructing the
ancestral state based on the character state at a single taxon. 
We make the conjecture precise and answer it
affirmatively for the case of the two-state symmetric model.

\section*{\sc{Maximum Parsimony on Subsets of Taxa}}
\label{sec-subset}
We start with some notation. Let $T$ be a rooted binary phylogenetic
tree on the leaf set (i.e., the set of taxa) $X$. Let the root of the
tree be $\rho$. We assume that each vertex in $T$ takes one of the two
states $\alpha $ and $\beta$.  The states evolve from the root state
under a simple symmetric model of substitution described as
follows. Suppose that $e=(u,v)$ is an edge of the tree $T$, where $u$ is
the vertex closer to the root than $v$ is. Let $p_e$ be the {\em
  substitution probability} on edge $e$: it is the probability that $v$
is in state $\beta$ conditional on $u$ being in state $\alpha$, and is
denoted by $\pp(v = \beta|u=\alpha)$. The model is assumed to be
symmetric, therefore, $p_e = \pp(v = \beta|u=\alpha) = \pp(v =
\alpha|u=\beta)$. Moreover we assume that $p_e \leq 1/2$. This model is
also known as the two-state Neyman model, and has been discussed
elsewhere in the literature \citep[for example,][]{cavender-1978,
  farris-1973}, and is a special case of the well known symmetric
$r$-state model \citep[see, e.g.,][]{tuffley-steel-1997}. A {\em binary
  character} is an assignment of one of the two possible states $\alpha$
and $\beta$ to each leaf of the tree, that is, it is a map $f: X \to
\{\alpha, \beta \}$.

In this section we analyze the probability that maximum parsimony
applied to a subset of the set of taxa correctly estimates the true
state at the root. Suppose that $Y$ is a subset of $X$ (denoted
$Y\subseteq X$). It induces a subtree $T_Y$, rooted at a vertex
$y$. Here $y$ is the most recent common ancestor of vertices in $Y$. It
is possible that $y = \rho$. Let $f_Y$ denote the restriction of a
binary character $f$ to $Y$. MP assigns states $\alpha$ or $\beta $ to
all internal nodes (including the root $\rho $) so that the total number
of substitutions is minimized. Such an assignment is not necessarily
unique: MP computes a set $S_z$ of possible states at each internal
vertex $z$, so that each most parsimonious assignment must assign one of
the states in $S_z$ to the vertex $z$. When MP is applied to a binary
character $f$, we have either $S_{\rho} = \{\alpha \}$ or $S_{\rho} =
\{\beta\}$ or $S_{\rho} = \{\alpha,\beta\}$ at the root $\rho$. If
$S_{\rho}$ is either $\{\alpha \}$ or $\{\beta\}$, then we say that MP
{\em unambiguously reconstructs} the root state; otherwise (when
$S_{\rho}$ is $\{\alpha,\beta\}$) we say that MP {\em ambiguously
  reconstructs} the root state.

The maximum parsimony algorithm may also be applied to $f_Y$ on the
subtree $T_Y$. It returns a state set $S_y= \{\alpha \}$ or $S_y=\{\beta\}$ or
$S_y=\{\alpha,\beta\}$ for the root $y$ of $T_Y$.

In the following, we will denote by $\fitch(f,T)$ the set of character states chosen by Fitch's maximum parsimony algorithm as possible root states when applied to a character $f$ on a tree $T$.

Li, Steel and Zhang defined the {\em unambiguous reconstruction
  accuracy} $UA(Y)$ and the {\em ambiguous reconstruction accuracy}
$AA(Y)$ as follows:

\begin{eqnarray*}
  UA(Y) & := & 
  \pp\left(\fitch(f_Y, T_Y) = \{\alpha\}|\rho = \alpha\right), \\[3mm]
  AA(Y) & := &
  \pp\left(\fitch(f_Y, T_Y)= \{\alpha, \beta\}|\rho = \alpha\right).
\end{eqnarray*}

In other words, $UA(Y)$ is the probability that the root state $\alpha$
evolves to a character $f$ for which maximum parsimony on $Y$ assigns
the state set $\{\alpha \}$ to the root $y$ of $T_Y$.

Furthermore, they defined the {\em reconstruction accuracy} as
\begin{equation}
  RA(Y) = UA(Y) + \frac{1}{2}AA(Y),
\end{equation}
where the second term indicates that when MP reconstructs the state at the
root ambiguously, we select one of the states with equal probability.

Note that MP, when applied to $Y$, estimates a state at the root vertex
$y$ of the subtree $T_Y$ induced by $Y$. Since it is possible that the
root $y$ of $T_Y$ is different from the root $\rho$ of $T$, we define
the reconstructed state at $y$ to be the estimate of the state at the
root based on the subset $Y$ of taxa.

Li, Steel and Zhang gave an example of a tree for which the
reconstruction accuracy of MP on a proper subset of taxa is higher than
the reconstruction accuracy of MP on all taxa, i.e., $RA(Y) > RA(X)$ for
some proper subset $Y$ of $X$. But their example requires that the taxa
in $Y$ are closer to the root than the taxa not in $Y$, i.e., that the probability of a substitution from the root to any taxon in $Y$ is smaller than the probability of a substitution from the root to the other taxa. The example that
we present in the following subsection does not require any taxa to be
closer to the root. On the contrary, our example shows that a {\em
  misleading taxon or taxa} (a taxon or taxa that have an adverse effect
on the reconstruction accuracy) may be arbitrarily close to the root.

\subsection*{\sc{An Example of a Misleading Taxon}}
\label{sec-misleading}
The main result of this section is the following theorem which shows
that there are trees on which the reconstruction accuracy improves when
a taxon close to the root is ignored in an MP based ancestral state
reconstruction. Moreover, such a misleading taxon may be arbitrarily
close to the root.

\begin{theorem}
\label{thm-misleading}
Let $p_z$ be any real number such that $0 < p_z < 1/2$. Then there
exists a binary phylogenetic tree $T$ on a leaf set $X$ and rooted at
$\rho$ such that the following conditions are satisfied:
\begin{enumerate}
\item for some leaf $z$, the substitution probability from $\rho$ to $z$
  is $p_z$;
\item $RA(X \backslash \{z\})> RA(X)$; and 
\item for each leaf $v \neq z$, the substitution probability $p_v$ from
  $\rho$ to $v$ is more than $p_z$, i.e., $z$ is closer to the root than
  any other taxon.
\end{enumerate}
\end{theorem}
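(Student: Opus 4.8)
The plan is to write down $T$ explicitly as a small tree and to carry out the (short) Fitch computations by hand. Let $\rho$ have two children: the leaf $z$, attached by an edge of substitution probability $p_z$, and an internal vertex $w$, attached by an edge of substitution probability $p$; let $w$ have two children $w_1,w_2$, each attached to $w$ by an edge of substitution probability $s$; and let each $w_i$ have two leaf children, all four pendant edges having substitution probability $r$. So $X=\{z,u_1,\dots,u_4\}$ with $u_1,u_2$ below $w_1$ and $u_3,u_4$ below $w_2$, which is a rooted binary tree. I will take $r$ slightly larger than $p_z$ and then $p,s$ sufficiently small. Writing $a\star b:=a+b-2ab$ for the combined substitution probability along two edges in series, each $u_i$ then has substitution probability $p\star s\star r>r>p_z$ from $\rho$, so condition~3 holds automatically; and condition~1 holds since $z$ is a child of $\rho$.

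Next I would reduce $RA(X\setminus\{z\})-RA(X)$ to a single inequality. With $Y:=X\setminus\{z\}$, the induced subtree $T_Y$ is rooted at $w$, and the bottom-up pass of Fitch on $T$ computes the same set $S_w$ at $w$ as Fitch on $T_Y$; at $\rho$ it then combines $S_w$ with $\{f(z)\}$. Set $A:=\pp(S_w=\{\alpha\}\mid\rho=\alpha)$, $B:=\pp(S_w=\{\beta\}\mid\rho=\alpha)$ and $C:=\pp(S_w=\{\alpha,\beta\})$, the last being independent of the root state by the $\alpha\leftrightarrow\beta$ symmetry (hence independent of $p$). A one-line case analysis of Fitch at $\rho$ gives $\fitch(f_Y,T_Y)=\{\alpha\}$ iff $S_w=\{\alpha\}$, $\fitch(f,T)=\{\alpha\}$ iff $f(z)=\alpha$ and $\alpha\in S_w$, and $\fitch(f,T)=\{\alpha,\beta\}$ iff $f(z)\notin S_w$; hence $UA(Y)=A$, $AA(Y)=C$, $UA(X)=(1-p_z)(A+C)$ and $AA(X)=(1-p_z)B+p_zA$. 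Using $A+B+C=1$ this yields
\[
RA(X\setminus\{z\})-RA(X)\;=\;\tfrac12\bigl(p_z(1+C)-(B+C)\bigr),
\]
so it suffices to arrange $p_z(1+C)>B+C$.

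Then I would compute $A,B,C$ by the Fitch recursion on the balanced four-leaf subtree. Putting $c_\alpha:=(1-s)(1-r)^2+sr^2$, $c_\beta:=(1-s)r^2+s(1-r)^2$ and $c_{ab}:=2r(1-r)$ for the law of $S_{w_i}$ conditional on $w=\alpha$, Fitch at $w$ gives $\pp(S_w=\{\beta\}\mid w=\alpha)=c_\beta(c_\beta+2c_{ab})$ and $C=2c_\alpha c_\beta+c_{ab}^2$, while $A$ and $B$ follow by mixing over the state at $w$ with weights $1-p,p$ (using the symmetry $\pp(S_w=\{\beta\}\mid w=\beta)=\pp(S_w=\{\alpha\}\mid w=\alpha)$). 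Now let $p\to0$ and $s\to0$: then $C\to6r^2(1-r)^2$ and $B\to r^3(4-3r)$, so $p_z(1+C)>B+C$ becomes $p_z\bigl(1+6r^2(1-r)^2\bigr)>r^3(4-3r)+6r^2(1-r)^2$. Taking $r=p_z$ and dividing by $p_z>0$, this is equivalent to $1-6p_z+14p_z^2-15p_z^3+6p_z^4>0$, and since
\[
6p_z^4-15p_z^3+14p_z^2-6p_z+1=(1-2p_z)(1-p_z)(3p_z^2-3p_z+1),
\]
with $3t^2-3t+1>0$ for all real $t$, this is strictly positive for $p_z\in(0,1/2)$. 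Hence $p_z(1+C)-(B+C)>0$ at $(p,s,r)=(0,0,p_z)$; by continuity in $(p,s,r)$ it stays positive when $r$ is taken slightly larger than $p_z$ and $p,s$ are small positive numbers, which is exactly the required configuration.

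The step I expect to be the real obstacle is choosing a tree that works at all, since the obvious candidates fail: if $T_Y$ were a single cherry, a direct computation shows $RA(X\setminus\{z\})-RA(X)<0$ as soon as every $u_i$ is at least as far from $\rho$ as $z$ is, so a depth-two subtree on four taxa is genuinely needed, and then the crux is the positivity of the quartic $1-6p_z+14p_z^2-15p_z^3+6p_z^4$ on $(0,1/2)$, i.e. the factorization above. A smaller point to watch is that $T_Y$ is rooted at $w\neq\rho$, so the edge $(\rho,w)$ must be retained when passing from conditioning on $w=\alpha$ to conditioning on $\rho=\alpha$.
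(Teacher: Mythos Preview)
Your argument is correct, and it is genuinely different from the paper's. The paper builds $T_Y$ as a balanced tree of \emph{arbitrary} depth $n$ with a common edge probability $q$, derives the inequality $(p_z-p_y)P_{\alpha}(n,q) > (1-2p_z)P_{\alpha\beta}(n,q) + (1-p_z-p_y)P_{\beta}(n,q)$, and then invokes an external convergence result (their Lemma~1, from Steel--Charleston and Yang) to force $P_{\alpha}(n,q)\to 1$ and the other two terms to $0$ as $q\to 0$; depth $n$ is finally chosen large to meet condition~3 while keeping $q$ small. You instead fix $T_Y$ once and for all as the depth-two balanced tree on four leaves, compute $RA(Y)-RA(X)=\tfrac12\bigl(p_z(1+C)-(B+C)\bigr)$ exactly, and reduce everything to the positivity of the quartic $6p_z^4-15p_z^3+14p_z^2-6p_z+1$ on $(0,1/2)$, which you dispatch via the factorization $(1-2p_z)(1-p_z)(3p_z^2-3p_z+1)$; continuity then lets you perturb $(p,s,r)$ off $(0,0,p_z)$. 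Your route is more elementary and fully self-contained---no cited lemma is needed---and it produces a minimal five-taxon witness for every $p_z\in(0,1/2)$, whereas the paper's construction gives no bound on the size of the tree. The paper's approach, on the other hand, makes transparent \emph{why} the phenomenon occurs (a large reliable subtree makes the extra leaf's tie-breaking role harmful) and would adapt more readily if one wanted additional constraints on the edge probabilities. Your remark that a single cherry for $T_Y$ cannot work is a nice complement, showing that depth two is actually necessary for this style of example.
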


To prove the above theorem, we first need some notation and a lemma. Let
$y$ be a vertex in a binary phylogenetic tree $T$, and let $Y$ be the
set of leaves below $y$. We associate three probabilities with $Y$ as
follows.
\begin{eqnarray*}
  P_{\alpha}(Y) & := & 
  \pp\left(\fitch(f_Y, T_Y) = \{\alpha\}|y = \alpha\right), \\[3mm]
  P_{\beta}(Y) & := & 
  \pp\left(\fitch(f_Y, T_Y) = \{\beta\}|y = \alpha\right), \\[3mm]
  P_{\alpha \beta}(Y) & := &
  \pp\left(\fitch(f_Y, T_Y)= \{\alpha, \beta\}|y = \alpha\right).
\end{eqnarray*}

Let $T_n$ be a balanced binary tree of depth $n$, i.e., with $n$ edges on the path from the root to each leaf. Let $X$ be its leaf
set. Suppose that the substitution probability on each edge of $T_n$
is $q$. For this particular symmetric tree, we denote $P_{\alpha}(X)$,
$P_{\beta}(X)$ and $P_{\alpha \beta}(X)$ by $P_{\alpha}(n,q)$,
$P_{\beta}(n,q)$ and $P_{\alpha \beta}(n,q)$, respectively. The
convergence properties of these probabilities (for $n \rightarrow
\infty$ and for various values of $q$) have been studied in detail in
\citep[][]{steel-charleston-1995} and \citep[][]{yang-2008}. We state their
result on the convergence of $P_{\alpha}(n,q)$ that additionally provides 
a lower bound on $P_{\alpha}(n,q)$ which is independent of $n$.

\begin{lemma}
  \label{lem-yang}\citep{steel-charleston-1995,yang-2008}
  Let $T_n$ be a binary balanced phylogenetic tree of depth $n \geq
  2$. Let $q < 1/8$ be the probability of substitution on each edge of
  the tree. Then $P_{\alpha}(n,q)$ approaches 
\[\frac{1}{2} \left(1- \frac{2q}{1-2q} + \frac{\sqrt{(1-8q)(1-4q)}}{(1-2q)^2}\right)\] from above as 
$n \rightarrow \infty $. Moreover, as $q$ goes to 0, the above limiting value approaches 1.
\end{lemma}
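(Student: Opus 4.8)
The plan is to reduce the statement to an analysis of the recursion that Fitch's algorithm induces on the triple $\bigl(P_\alpha(n,q),P_\beta(n,q),P_{\alpha\beta}(n,q)\bigr)$ as one moves up the balanced tree $T_n$, to locate the fixed points of that recursion, and then to identify the displayed quantity with the relevant fixed point; monotone convergence and the behaviour as $q\to0$ will follow. First I would record the recursion. Writing $a_n:=P_\alpha(n,q)$, $b_n:=P_\beta(n,q)$, $c_n:=P_{\alpha\beta}(n,q)$ (so that $a_n+b_n+c_n=1$), the root of $T_n$ has two children, and the subtree rooted at each child is a copy of $T_{n-1}$ attached by an edge of substitution probability $q$. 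Conditioning on the root being in state $\alpha$ and using the symmetry of the model, each child returns the Fitch set $\{\alpha\}$, $\{\beta\}$ or $\{\alpha,\beta\}$ with probabilities $\bar a=(1-q)a_{n-1}+q\,b_{n-1}$, $\bar b=(1-q)b_{n-1}+q\,a_{n-1}$ and $\bar c=c_{n-1}$ respectively; applying Fitch's rule at the root (intersect the two child sets if they meet, otherwise take their union) then gives $a_n=\bar a^2+2\bar a\bar c$, $b_n=\bar b^2+2\bar b\bar c$ and $c_n=\bar c^2+2\bar a\bar b$. Since $\bar a+\bar b+\bar c=1$, this is a fixed polynomial map, started from $(a_1,b_1,c_1)=\bigl((1-q)^2,\,q^2,\,2q(1-q)\bigr)$.

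Next I would solve for the fixed points on the simplex $\{a+b+c=1\}$. A short computation from the recursion gives $a_n-b_n=(1-2q)(1+c_{n-1})(a_{n-1}-b_{n-1})$, so at a fixed point either $a=b$ or $(1-2q)(1+c)=1$. The branch $a=b$ yields the uninformative fixed point $(a,b,c)=(\tfrac13,\tfrac13,\tfrac13)$, at which the reconstruction accuracy equals $\tfrac12$; the other branch forces $c=\tfrac{2q}{1-2q}$, whence $a+b=1-c=\tfrac{1-4q}{1-2q}$ and $a-b=\sqrt{(1-4q)(1-8q)}/(1-2q)^2$, which is real precisely when $q\le\tfrac18$. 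Hence, for $q<\tfrac18$ there is a second fixed point whose $\alpha$-coordinate is
\[
  a^*=\tfrac12\bigl((a+b)+(a-b)\bigr)=\tfrac12\!\left(1-\tfrac{2q}{1-2q}+\frac{\sqrt{(1-8q)(1-4q)}}{(1-2q)^2}\right),
\]
which is exactly the displayed expression; one also checks that $a^*>\tfrac13$ for $q<\tfrac18$, with equality at the threshold $q=\tfrac18$.

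It then remains to show that $P_\alpha(n,q)$ decreases to $a^*$ and to pass to the limit $q\to0$. The convergence $P_\alpha(n,q)\to a^*$ for $q<\tfrac18$ is established in \citep{steel-charleston-1995,yang-2008}, which I would simply quote. For the \emph{from above} refinement I would prove monotonicity by induction: the orbit remains in the region $\{a\ge b\ge 0,\ a+b+c=1\}$ — indeed $a_n-b_n$ is a positive multiple of $a_{n-1}-b_{n-1}$ and $a_1-b_1=1-2q>0$, so $a_n>b_n$ for every $n$ — on which the recursion map is monotone in the natural partial order; combining this with the single-step check $a_2\le a_1$ gives $a_1\ge a_2\ge a_3\ge\cdots$, whence $P_\alpha(n,q)$ decreases to $a^*$ and in particular $P_\alpha(n,q)\ge a^*$ for every $n\ge2$, the asserted $n$-independent lower bound. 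Finally, letting $q\to0$ in the closed form for $a^*$ gives $\tfrac12(1-0+1)=1$. The step I expect to be the main obstacle is exactly this monotonicity / basin-of-attraction claim: the recursion is a genuinely two-dimensional nonlinear map which, for $q<\tfrac18$, has two fixed points, so confirming that the orbit from $(a_1,b_1,c_1)$ converges monotonically to the informative fixed point $a^*$ rather than to $\tfrac13$ requires the careful estimates carried out in \citep{steel-charleston-1995,yang-2008}, which I would lean on rather than redo.
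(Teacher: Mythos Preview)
The paper does not give a proof of this lemma: it is quoted directly from \citep{steel-charleston-1995,yang-2008} and stated without argument, so there is no ``paper's own proof'' to compare your proposal against. What you have written is a correct sketch of the derivation of the limiting value that one finds in those references: the Fitch recursion for $(a_n,b_n,c_n)$, the identity $a_n-b_n=(1-2q)(1+c_{n-1})(a_{n-1}-b_{n-1})$, and the resulting fixed-point computation all check out and deliver exactly the displayed expression for $a^*$. The observation that $a^*\to1$ as $q\to0$ is immediate.

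Two small remarks. First, on the $a=b$ branch the fixed-point equation actually has two solutions, $(\tfrac13,\tfrac13,\tfrac13)$ and the degenerate $(0,0,1)$; you only mention the first, though the omission is harmless. Second, your monotonicity argument (``the recursion map is monotone in the natural partial order'') is the genuinely delicate step, as you yourself flag: the map is two-dimensional and nonlinear, and showing that the orbit from $\bigl((1-q)^2,q^2,2q(1-q)\bigr)$ decreases in its first coordinate and is attracted to $a^*$ rather than to $\tfrac13$ is precisely the content of the cited works. Since the paper itself simply invokes those references, your decision to lean on them for this step is entirely in keeping with how the lemma is used here.
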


\begin{proof}[Proof of Theorem~\ref{thm-misleading}]
  Let $T$ be a phylogenetic tree rooted at $\rho $ constructed as
  follows. The left subtree of $T$ contains a single leaf $z$. The right
  subtree of $T$ is $T_Y$ with leaf set $Y$ and root $y$. Therefore, the leaf set of $T$ is $X = Y \cup \{z\}$. We choose
  $T_Y$ to be a balanced binary tree of depth $n$ and a substitution
  probability $q$ on each edge. Let the substitution probabilities on
  $(\rho, z)$ and $(\rho, y)$ be $p_z$ and $p_y$, respectively, where
  $p_z$ is any given real number such that $0 < p_z < 1/2$. (See an
  illustration of these parameters in Figure 1.

  For the above tree, the reconstruction accuracy on $X$ is given by
  \begin{eqnarray*}
    RA(X) & = & 
    (1-p_z)\left((1-p_y)P_{\alpha}(n,q) + p_yP_{\beta}(n,q) 
      + P_{\alpha \beta}(n,q)\right) \\[1mm]
    && + \frac{1}{2}\,
    p_z\left((1-p_y)P_{\alpha}(n,q) + p_yP_{\beta}(n,q)\right)
    \nonumber \\[1mm]
    && + \frac{1}{2}\,
    (1-p_z)\left(p_yP_{\alpha}(n,q)+ (1-p_y)P_{\beta}(n,q)\right).
  \end{eqnarray*}
  
  The reconstruction accuracy on $Y$ is given by
  \begin{equation*}
    RA(Y) = (1-p_y)P_{\alpha}(n,q) 
    + p_yP_{\beta}(n,q) 
    + \frac{1}{2}\,P_{\alpha \beta}(n,q).
  \end{equation*}

  In order to satisfy $RA(Y) > RA(X)$, we therefore must have
  \begin{equation}
    \label{eq-constr1}
    (p_z-p_y)P_{\alpha}(n,q) > (1-2p_z)P_{\alpha \beta}(n,q) 
    + (1-p_z-p_y)P_{\beta}(n,q).
  \end{equation}

  We now show that for any value of $p_z$ however small, the remaining
  substitution probabilities $q$ and $p_y$ and the depth $n$ of $T_Y$
  can be chosen such that $RA(Y)> RA(X)$ (condition 2 in
  Theorem~\ref{thm-misleading}), and for every vertex $v$ in $Y$, the
  probability of a change of state from the root to $v$ is more than
  $p_z$ (condition 3 in Theorem~\ref{thm-misleading}).

  We express the third condition in Theorem~\ref{thm-misleading} in a
  different form. Let $Q:= 1-2q$, $P_z := 1-2p_z$ and $P_y := 1-2p_y$.
  Since the tree $T_n$ is symmetric, the probability of a change of
  state from the root to any leaf $v$ in $Y$ is the same, and is given
  by $p_v = \frac{1-P_yQ^n}{2}$.  Therefore, the third condition may now
  be written as $ P_yQ^n < P_z$, or equivalently as
  \begin{equation}
    \label{eq-constr3}
    (1-2q)^n < \frac{1-2p_z}{1-2p_y}.
  \end{equation}

  It follows from Lemma~\ref{lem-yang} that, for all $n\geq2$, as $q$ approaches $0$, the
  left hand side of Equation~(\ref{eq-constr1}) approaches $p_z - p_y$
  and the right hand side approaches 0. Therefore, there is a real
  number $\epsilon$ such that $0 < \epsilon < 1/8$, and whenever $q <
  \epsilon$, Equation~(\ref{eq-constr1}) is satisfied. Now given a value
  of $p_z$, we first arbitrarily fix $p_y$ such that $0 < p_y < p_z$,
  and then fix a value of $H:= (1-2q)^n$ satisfying the constraint in
  Equation~(\ref{eq-constr3}). We then choose $n$ sufficiently large so
  that $q = (1 - H^{1/n})/2 < \epsilon$ and the constraint given in
  Equation~(\ref{eq-constr1}) is satisfied as well. This completes the proof.
\end{proof}

Note that when $q \geq \frac{1}{8}$,
the sequence $P_{\alpha}(n,q)$ has quite
different convergence properties than when $q < \frac{1}{8}$, and the bound provided by Lemma~\ref{lem-yang} does not apply,
\citep[see ][for details]{steel-charleston-1995, yang-2008}. Therefore, our construction of a misleading taxon given in the proof of Theorem \ref{thm-misleading} strongly depends on $q$ being sufficiently small.

\subsection*{\sc{A Single Taxon Under a Molecular Clock}}
\label{sec-1taxon}
In this section, we consider binary characters on a binary phylogenetic
tree $T$ with leaf set $X$ under a molecular clock and the two-state
symmetric model introduced earlier. Let $p$ be the probability that a
leaf is in a different state than the root. Therefore, if we were to
guess the root state by looking at only one taxon, the probability of
success would be the probability that the root state was conserved at
this taxon, which is $1-p$. That is, if $Y = \{x_1\} $ is a single taxon
subset of $X$, then $RA(Y) = 1-p$. In the following, we show that $1-p$
is in fact a lower bound on $RA(X)$, implying that MP applied to all
taxa reconstructs the root state at least as successfully as
reconstructing the root state from a single taxon.

As shown in Figure 2, 
we denote the children of $\rho$ by
$y_1$ and $y_2$, and define $T_i$ to be the subtrees rooted at $y_i$ for
$i$ in $\{1,2\}$. Let the probabilities of a change of state from $\rho$
to $y_i$ be $p_i$. The probabilities of a change of state from $y_i$ to
any leaf under $y_i$ are $p^\prime_i$. For $i$ in $\{1,2\}$, we define
$P_i:=1-2p_i$. Similarly we define $P:=1-2p$.

In the above notation, we prove the following lower bound on $RA(X)$.

\begin{theorem}
  \label{thm-low} For any rooted binary phylogenetic ultrametric
  (clock-like) tree $T$ with leaf set $X$, the reconstruction accuracy
  of MP is at least equal to the conservation probability from the root
  to any leaf, that is,
  \[
  RA(X) \geq 1-p.
  \]
\end{theorem}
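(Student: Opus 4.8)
The plan is to reduce the inequality to a scale-free statement about the ``Fitch bias'' of a subtree and then prove it by induction. Note first that for $Y=X$ we have $UA(X)=P_\alpha(X)$ and $AA(X)=P_{\alpha\beta}(X)$, so, substituting $P_{\alpha\beta}(X)=1-P_\alpha(X)-P_\beta(X)$ into $RA(X)=P_\alpha(X)+\tfrac12 P_{\alpha\beta}(X)$ and rearranging, one gets the identity $RA(X)=\tfrac12\bigl(1+D(X)\bigr)$, where I write $D(W):=P_\alpha(W)-P_\beta(W)$ for the root $w$ of a subtree $T_W$. Since $1-p=\tfrac12(1+P)$ with $P=1-2p$, the theorem is equivalent to the single inequality $D(X)\ge P$.

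Next I would derive a recursion for $D$ from the bottom-up Fitch rule. If the root $w$ of $T_W$ has children $w_1,w_2$ with pendant subtrees $T_{W_1},T_{W_2}$ and edge substitution probabilities $\pi_1,\pi_2$, then conditioning on $w=\alpha$ and using (i) conditional independence of the two subtrees given the state at $w$, (ii) the symmetry of the two-state model, and (iii) the fact that Fitch assigns to $w$ the intersection of the children's state sets when it is nonempty and their union otherwise, one collects the relevant cases and, after a short symmetrization, obtains
\[
D(W)=\Pi_1\,D(W_1)\Bigl(1-\tfrac12 s_2\Bigr)+\Pi_2\,D(W_2)\Bigl(1-\tfrac12 s_1\Bigr),
\]
where $\Pi_i:=1-2\pi_i$ and $s_i:=P_\alpha(W_i)+P_\beta(W_i)=1-P_{\alpha\beta}(W_i)\in[0,1]$. (If a child is a leaf, $D=1$ there directly, so the recursion is needed only at genuine internal vertices.)

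Then I would prove by induction on the number of leaves the following claim, which yields the theorem upon taking $T_W=T$: \emph{for every rooted binary ultrametric subtree $T_W$, if $p'$ is the common substitution probability from $w$ to each leaf of $T_W$ and $P':=1-2p'$, then $D(W)\ge P'$.} The base case (a single leaf) is immediate since then $D=1=P'$. For the inductive step, ultrametricity of $T_W$ forces each $T_{W_i}$ to be ultrametric, and the molecular clock translates into the multiplicative relation $\Pi_i P_i'=P'$, where $P_i':=1-2p_i'$ is the leaf-conservation parameter of $T_{W_i}$ --- this is exactly where the clock hypothesis is used. Because all the quantities $\Pi_i,P_i',P'$ are nonnegative (every edge has substitution probability at most $1/2$) and $1-\tfrac12 s_j\ge\tfrac12>0$, applying the inductive hypothesis $D(W_i)\ge P_i'$ term by term in the recursion gives $D(W)\ge P'\bigl(1-\tfrac12 s_2\bigr)+P'\bigl(1-\tfrac12 s_1\bigr)=P'\bigl(2-\tfrac12(s_1+s_2)\bigr)$, and since $s_1,s_2\le1$ the bracket is at least $1$, so $D(W)\ge P'$.

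The step I expect to be the main obstacle is getting the recursion for $D$ exactly right: one must carefully track how the substitution on the pendant edge mixes $P_\alpha(W_i)$ and $P_\beta(W_i)$ into $\Pi_i D(W_i)$, and carry out the symmetrization so that the two subtrees enter on equal footing. The conceptual key (rather than a difficulty) is choosing the strengthened inductive statement $D(W)\ge 1-2p'$: only then does the clock relation $\Pi_iP_i'=P'$ make $P'$ factor cleanly out of the recursion, leaving the trivial bound $s_i\le1$ to finish. Without the clock, $\Pi_1P_1'$ and $\Pi_2P_2'$ need not be equal and this factoring fails, which is consistent with the misleading-taxon example of the preceding sections.
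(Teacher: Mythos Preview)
Your proposal is correct. The recursion
\[
D(W)=\Pi_1\,D(W_1)\Bigl(1-\tfrac12 s_2\Bigr)+\Pi_2\,D(W_2)\Bigl(1-\tfrac12 s_1\Bigr)
\]
checks out (with $A_i-B_i=\Pi_i D(W_i)$ and $A_1A_2-B_1B_2=\tfrac12\bigl(s_1\Pi_2D(W_2)+s_2\Pi_1D(W_1)\bigr)$ doing the work), and the inductive step using $\Pi_iP_i'=P'$ and $s_i\le 1$ is clean.

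The paper follows the same overall plan --- induction on the number of leaves via a recursion coming from the Fitch rule --- but with a different auxiliary quantity. The paper sets $D(X):=P_\alpha(X)+\tfrac12 P_{\alpha\beta}(X)-\tfrac12(1+P)=RA(X)-(1-p)$, so the clock parameter is baked into $D$ itself, and then records an algebraic identity (obtained with a computer algebra system) expressing $4D(X)$ as a sum of manifestly nonnegative terms in $D(Y_1),D(Y_2),P_{\alpha\beta}(Y_i),P_i,P$. Your choice $D(W)=P_\alpha(W)-P_\beta(W)$ is clock-free, so your recursion holds for \emph{any} rooted binary tree and the molecular clock enters only at the single line $\Pi_iP_i'=P'$; this makes the argument shorter, avoids the computer-algebra step, and pinpoints exactly where ultrametricity is used. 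The paper's formulation, on the other hand, gives nonnegativity of the deficit $RA(X)-(1-p)$ directly at each node without a separate comparison step. The two $D$'s are related by $D_{\text{paper}}=\tfrac12\bigl(D_{\text{yours}}-P\bigr)$, so the approaches are equivalent up to this repackaging.
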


\begin{proof}
  We first state two recursions, which we then use to give an
  inductive proof of the theorem.

\begin{eqnarray*}
  P_{\alpha}(X) &=& 
  \left(\frac{1+P_1}{2}P_{\alpha}(Y_1) 
    + \frac{1-P_1}{2}P_{\beta}(Y_1)\right)
  \left(\frac{1+P_2}{2}P_{\alpha}(Y_2) 
    + \frac{1-P_2}{2}P_{\beta}(Y_2)\right) \\[2mm]
  && + P_{\alpha \beta}(Y_1)
  \left(\frac{1+P_2}{2}P_{\alpha}(Y_2) 
    + \frac{1-P_2}{2}P_{\beta}(Y_2)\right)  \\[2mm]
  && + \left(\frac{1+P_1}{2}P_{\alpha}(Y_1) 
    + \frac{1-P_1}{2}P_{\beta}(Y_1)\right)P_{\alpha \beta}(Y_2)
\end{eqnarray*}

\begin{eqnarray*}
  P_{\alpha \beta}(X) &=& 
  \left(\frac{1+P_1}{2}P_{\alpha}(Y_1) 
    + \frac{1-P_1}{2}P_{\beta}(Y_1)\right)
  \left(\frac{1-P_2}{2}P_{\alpha}(Y_2) 
    + \frac{1+P_2}{2}P_{\beta}(Y_2)\right) \\[2mm]
  && + \left(\frac{1-P_1}{2}P_{\alpha}(Y_1) 
    + \frac{1+P_1}{2}P_{\beta}(Y_1)\right)
  \left(\frac{1+P_2}{2}P_{\alpha}(Y_2) 
    + \frac{1-P_2}{2}P_{\beta}(Y_2)\right)  \\[2mm]
  && + P_{\alpha \beta}(Y_1)P_{\alpha \beta}(Y_2)
\end{eqnarray*}

We define $D(X) := P_{\alpha}(X) + P_{\alpha \beta}(X)/2
- (1+P)/2$, and similarly we define $D_1 := D(Y_1)$ and $D_2 :=
D(Y_2)$. The above recursions can be manipulated with a computer algebra
system to verify that
\begin{equation*}
  4 D(X) = 2 P_{\alpha \beta}(Y_1) D_2 P_2 
  + 2 P_{\alpha \beta}(Y_2) D_1 P_1 + 2 D_2 P_2 + 
  2 D_1 P_1 + P_{\alpha \beta}(Y_1) P 
  + P_{\alpha \beta}(Y_2) P. 
\end{equation*}

Now, by induction on the number of leaves, we show that $D(X)$ is
non-negative. The base case of the inductive proof is when $Y_1$ and
$Y_2$ are singleton sets, in which case $D(Y_1)$, $D(Y_2)$ and $D(X)$
are all equal to $0$, that is $RA(X)$ is $1-p$. Suppose that the tree
$T$ has $n$ taxa, and suppose that $D(X)$ is non-negative for all trees
having fewer than $n$ taxa. Since both $Y_1$ and $Y_2$ contain fewer
than $n$ taxa, $D(Y_1)$ and $D(Y_2)$ are both non-negative. Since
$P_{\alpha \beta}(Y_1)$, $P_{\alpha \beta}(Y_2)$, $P_1$
and $P_2$ are all non-negative, the right hand side of the above
equation is non-negative, implying the theorem.
\end{proof}

\section*{\sc{Discussion}}
In this paper we analyzed the question of how MP performs when used to
reconstruct the ancestral root state. In particular, we considered the
problem for phylogenetic trees on which the probability of a change of
state from the root vertex to any leaf is constant. Earlier simulation
studies \citep[e.g.,][]{salisbury-kim-2001, zhang-nei-1997} suggested
that the reconstruction accuracy is generally increased when more taxa
are considered. But simulations conducted by Li, Steel and Zhang showed
that even under a molecular clock, MP may perform better on certain
subsets of taxa. We present an example of
a tree in which one of the subtrees at the root consists of a single
leaf and a pending edge, and the other subtree is a balanced binary tree
of large depth and small ($< 1/8$) substitution probabilities on all
edges. On this tree, we observed that the ancestral state reconstruction
is more accurate if only the set of taxa on the balanced subtree is
considered. This is in contrast to the example given by Li, Steel and
Zhang in which an outgroup taxon closer to the root or a single fossil
record may give a better estimate of the root state than considering the
whole tree. As our example shows, even a very short edge connecting the
root with a leaf cannot guarantee an accurate root state estimation if
the remaining taxa induce a balanced tree with a large number of
taxa. For such trees, it may be better to ignore the fossil or a taxon
closer to the root.  Thus there seems to be no general theoretical
guideline to decide what subsets of taxa are to be used for a more
reliable reconstruction of the root state. In general we believe that
very long leaf edges would have an adverse effect on the ancestral state
reconstruction using MP, but it would be useful to quantitatively or
algorithmically state and prove such an expectation.

While using the data on a subset of taxa may give a more accurate
estimate of the root state, in general a single taxon subset does not
give a better reconstruction accuracy. We showed this
by resolving a conjecture of Li, Steel and
Zhang. They conjectured that for two state characters on an ultrametric
(clock-like) tree and a symmetric model of substitution, ancestral state
reconstruction using all taxa is at least as accurate as that using a
single taxon. We expect such a result to be true even when there are
more than two states.

\subsection* {\sc{ Acknowledgements} }
We take this opportunity to thank the Allan Wilson Centre for Molecular
Ecology and Evolution, New Zealand, for support, and Mike Steel for
introducing us to the problems discussed here and for many useful
conversations.

\newpage

\clearpage
\section*{\sc{References}}
 
\renewcommand*{\refname}{}

\newpage
\subsection*{Figure captions:}

\begin{enumerate}
\item[{\bf Figure 1:}] A tree on which MP is more accurate when applied
  to $Y\subset X$.
\item[{\bf Figure 2:}] Illustration for Theorem~\ref{thm-low}: For any
  clocklike binary phylogenetic tree T the reconstruction accuracy of MP
  based on all leaves is at least as good as the one based on a single
  leaf.
\end{enumerate}

\newpage

\subsection*{Figure 1}

\begin{figure}[ht]
\begin{center}
  \includegraphics[scale=1]{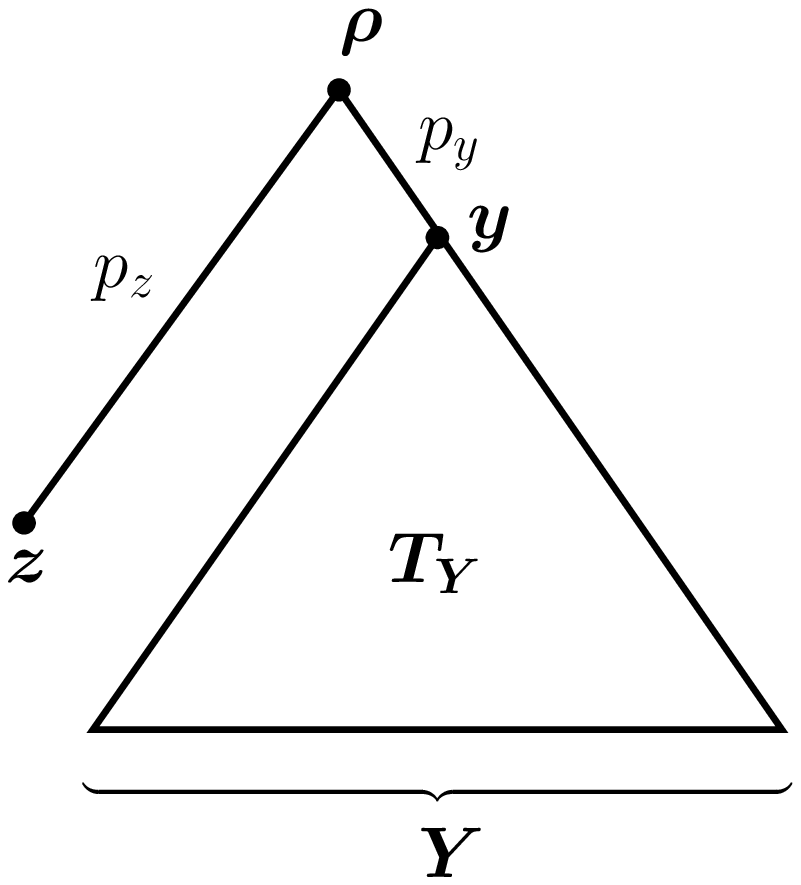}
\end{center}
\label{fig-rec-accuracy}
\end{figure}

\newpage
\subsection*{Figure 2}

\begin{figure}[ht]
\begin{center}
  \includegraphics[scale=1]{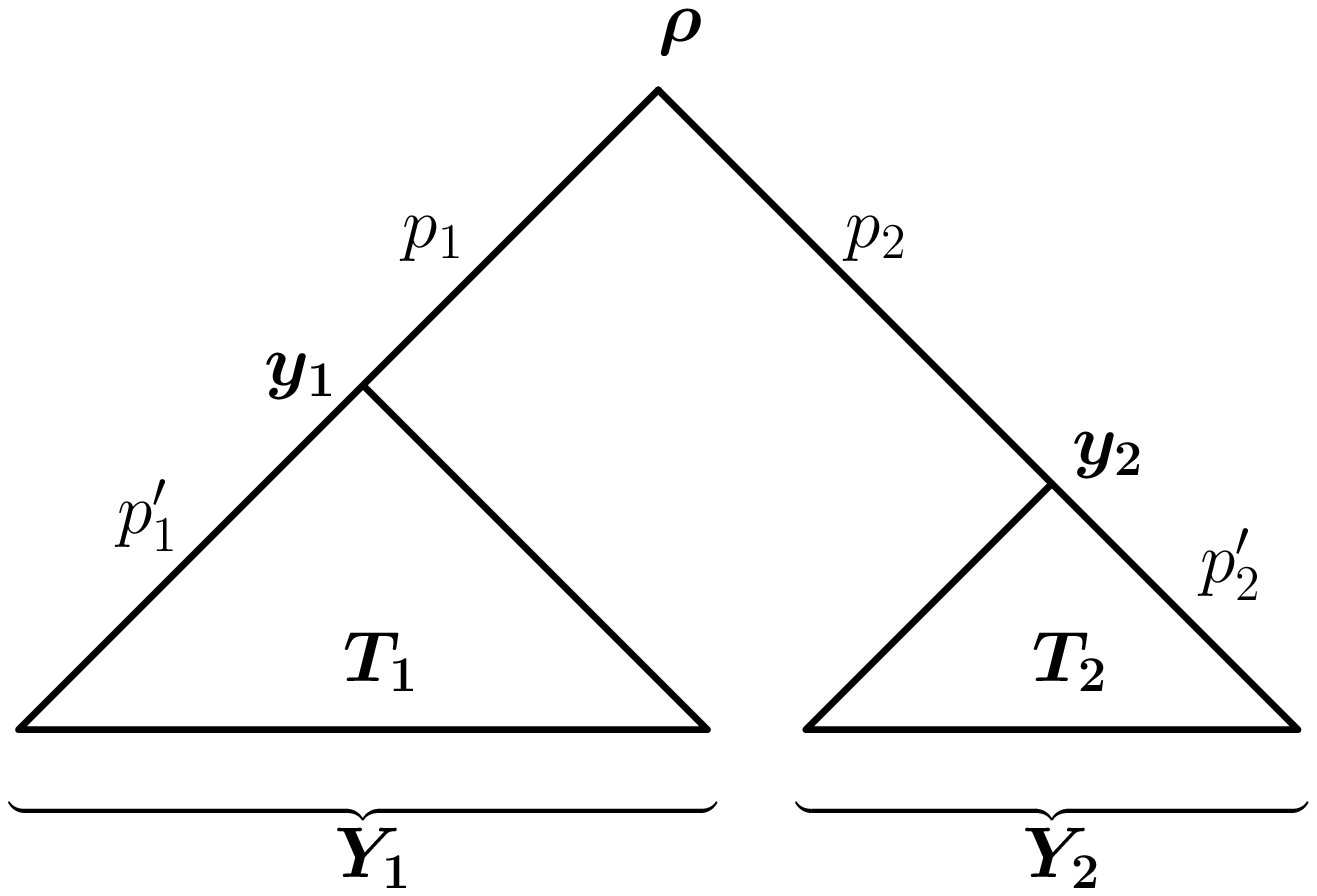}
\end{center}

\label{fig-recur}
\end{figure}

\end{document}